\documentclass{amsart}

\usepackage{hyperref, amsmath, amsthm}
\usepackage[authoryear]{natbib}
\usepackage{amsaddr}
\usepackage{graphicx}
\usepackage{subcaption}
\usepackage{tikz}
\usetikzlibrary{arrows,positioning,shapes.geometric}

\newtheorem{theorem}{Theorem}
\newtheorem{proposition}{Proposition}
\newtheorem{lemma}{Lemma}

\newenvironment{sketchproof}[1][\proofname]{\proof[#1]\mbox{}}{\endproof}

\newtheoremstyle{propertystyle}
{3pt} 
{3pt} 
{\it} 
{} 
{\bfseries} 
{.} 
{.5em} 
{} 
\theoremstyle{propertystyle}
\newtheorem{property}{Property}

\newtheorem*{assumption*}{\assumptionnumber}
\providecommand{\assumptionnumber}{}
\makeatletter
\newenvironment{assumption}[2]
 {%
  \renewcommand{\assumptionnumber}{Assumption #1 (#2)}%
  \begin{assumption*}%
  \protected@edef\@currentlabel{#1 (#2)}%
 }
 {%
  \end{assumption*}
 }
\makeatother

\theoremstyle{remark}

\begin{document}

\title[Entropy Balancing]{Entropy Balancing is Doubly Robust}
\author{Qingyuan Zhao}
\address[Qingyuan Zhao]{Department of Statistics, Wharton School,
  University of Pennsylvania}
\email{qyzhao@wharton.upenn.edu}
\author{Daniel Percival}
\address[Daniel Percival]{Google Inc.}
\email{dancsi@google.com}
\date{\today}

\begin{abstract}
Covariate balance is a conventional key diagnostic for methods used
estimating causal effects from observational studies. Recently, there
is an emerging interest in directly incorporating covariate balance in the
estimation. We study a recently proposed entropy maximization method
called Entropy Balancing (EB), which exactly matches the covariate
moments for the different experimental groups in its optimization problem.
We show EB is doubly robust with
respect to linear outcome regression and logistic propensity score
regression, and it reaches the asymptotic semiparametric
variance bound when both regressions are correctly specified. This is
surprising to us because there is no
attempt to model the outcome or the treatment assignment in the
original proposal of EB. Our theoretical results and simulations
suggest that EB is a very appealing alternative to the conventional
weighting estimators that estimate the propensity score by maximum likelihood.

\end{abstract}

\thanks{This work is completed when Qingyuan Zhao is a Ph.D.\ student
  at the Department of Statistics, Stanford University. We would like to thank Jens Hainmueller, Trevor Hastie, Hera He, Bill
  Heavlin, Diane Lambert, Daryl Pregibon, Jean Steiner and one
  anonymous reviewer for their helpful comments.}

\keywords{Causal Inference, Double Robustness, Exponential Tilting,
  Convex Optimization, Survey Sampling}

\maketitle

\section{Introduction}
\label{sec:introduction}

Consider a typical setting of observational study that two
conditions (``treatment'' and ``control'') are not randomly
assigned to the units. 
Deriving a causal conclusion from such observational data is
essentially difficult because the treatment exposure may be related to
some covariates that are also related to the outcome. In this case, those
covariates may be imbalanced between the treatment groups and the
naive mean causal effect estimator can be severely biased.

To adjust for the covariate imbalance, the seminal work of
\citet{rosenbaum1983} points out the essential role of \emph{propensity
score}, the probability of exposure to treatment conditional on
observed covariates. This quantity, rarely known in an observation
study, may be estimated from the data. Based on the estimated
propensity score, many statistical methods are proposed to estimate
the mean causal effect. The most popular approaches are
\emph{matching}
\citep[e.g.][]{rosenbaum1985constructing,abadie2006large},
\emph{stratification} \citep[e.g.][]{Rosenbaum1984}, and \emph{weighting}
\citep[e.g.][]{Robins1994,Hirano2001}. Theoretically, propensity
score weighting is the most attractive among these
methods. \citet{Hirano2003} show that nonparametric propensity score
weighting can achieve the semiparametric efficiency bound for the
estimation of mean causal effect derived by \citet{Hahn1998}. Another
desirable property is double robustness. The pioneering work of
\citet{Robins1994} augments propensity score weighting by an outcome
regression model. The resulting estimator has the so-called
\emph{double robustness} property:
\begin{property}
\label{property:dr}
If either the propensity score model or the outcome regression model is
correctly specified, the mean causal effect estimator is statistically consistent.
\end{property}

In practice, the success of any propensity score method hinges on the
quality of the estimated propensity score.
The weighting methods are usually more sensitive to model misspecification
than matching and stratification, and furthermore, this bias can even
be amplified by a doubly robust estimator, brought to
attention by \citet{Kang2007}.
In order to avoid model misspecification, applied researchers usually increase the
complexity of the propensity score model until a sufficiently
balanced solution is found. This cyclical process of modeling
propensity score and checking covariate balance is criticized as the
``propensity score tautology'' by \citet{Imai2008} and, moreover, has
no guarantee of finding a satisfactory solution eventually.

Recently, there is an emerging interest, particularly among applied
researchers, in directly incorporating
covariate balance in the estimation procedure, so there is no
need to check covariate balance repeatedly
\citep[e.g.][]{graham2012inverse,Diamond2013,Imai2014,zubizarreta2015stable}.
In this paper, we study a method of this kind called \emph{Entropy
  Balancing} (hereafter EB) proposed in \citet{Hainmueller2011}.
In a nutshell, EB solves an (convex) entropy
maximization problem under the constraint of exact balance of
covariate moments. Due to its easy interpretation and fast
computation, EB has already gained some popularity in applied fields
\citep{Marcus2013,Ferwerda2014}. However, little do we known about the
theoretical properties of EB. The original proposal in
\citet{Hainmueller2011} did not give a condition such that EB is
guaranteed to give a consistent estimate of the mean causal effect.

In this paper, we shall show EB is indeed a very appealing propensity score weighting
method. We find EB simultaneously fits a logistic regression model for
the propensity score and a linear regression model for the
outcome. The linear predictors of these regression models are the
covariate moments being balanced. We shall prove EB is doubly robust
(Property \ref{property:dr}), in the sense that if at least one of the
two models are correctly specified, EB is consistent for the Population
Average Treatment effect for the Treated (PATT), a common quantity of
interest in causal inference and survey sampling.
Moreover, EB is sample bounded \citep{Tan2010}, meaning the PATT
estimator is always within the range of the observed outcomes, and it
is  semiparametrically efficient if both models are correctly
specified. Lastly, The two linear models have an exact correspondence
to the primal and dual optimization problem used to solve EB,
revealing an interesting connection between doubly robust estimation
and convex optimization.

Our discoveries can be summarized in the diagram in Figure
\ref{fig:ps-or-cb}. Conventionally, the recipe given by Robins and
his coauthors is to fit separate models for propensity score and
outcome regression and then combine them by a doubly robust estimator
\citep[see e.g.][]{Robins1994,Lunceford2004,Bang2005,Kang2007}. In contrast, Entropy Balancing
achieves this goal through enforcing covariate balance. The
primal optimization problem of EB amounts to
an empirical calibration estimator \citep{deville1992calibration,sarndal2005estimation},
which is widely popular in survey sampling but perhaps not sufficiently recognized in
causal inference \citep{chan2015}. The balancing constraints in this optimization
problem result in unbiasedness of the PATT estimator under linear
outcome regression model. The dual optimization problem of EB is
fitting a logistic propensity score model with a loss function
different from the negative binomial likelihood. The
Fisher-consistency of this loss function (also called \emph{proper}
scoring rule in statistical decision theory, see e.g.\
\citet{gneiting2007strictly}) ensures the other half of double
robustness---consistency under correctly specified propensity score
model. Since EB essentially just uses a different loss function, other
types of propensity score models, for example the generalized additive
models \citep{hastie1990generalized}, can also easily be fitted. A
forthcoming article by \citet{zhao2016covariate} offers more discussion
and extension to other weighted average treatment effects.

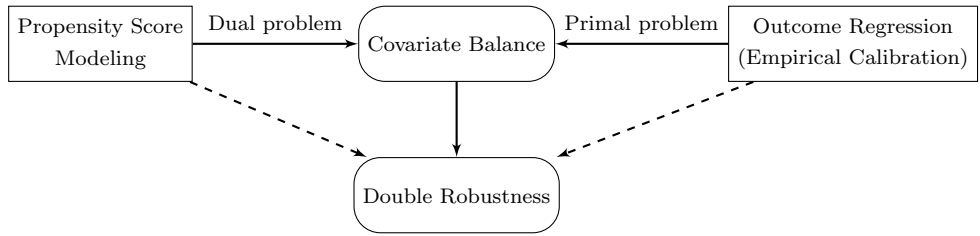
\begin{figure}[t]
  \centering
\begin{tikzpicture}[>=latex']
        \tikzset{block/.style= {draw, rectangle, align=center,minimum width=2cm,minimum height=1cm},
        rblock/.style={draw, shape=rectangle,rounded corners=1.0em,align=center,minimum width=2cm,minimum height=1cm},
        input/.style={ 
        draw,
        trapezium,
        trapezium left angle=60,
        trapezium right angle=120,
        minimum width=2cm,
        align=center,
        minimum height=1cm
    },
        }
        \node [block]  (ps) {\footnotesize Propensity Score \\
          \footnotesize Modeling};
        \node [rblock, right = 2.2cm of ps] (cb) {\footnotesize Covariate Balance};
        \node [block, right = 2.3cm of cb] (or) {\footnotesize Outcome
          Regression \\ \footnotesize (Empirical Calibration)};
        \node [rblock, below = 1cm of cb] (br) {\footnotesize Double Robustness};

        \path[draw,->,thick] (ps) edge node[auto, midway] {\footnotesize Dual problem} (cb)
                    (or) edge node[auto, midway, above] {\footnotesize Primal problem} (cb)
                    (cb) edge (br)
                    ;
        \path[draw,dashed,->,thick] (ps) edge (br)
                                    (or) edge (br)
                    ;
    \end{tikzpicture}
    \caption{The role of covariate balance in doubly robust
      estimation. Dashed arrows: conventional procedure to achieve
      double robustness. Solid arrows: double robustness of Entropy
      Balancing via covariate balance.}
    \label{fig:ps-or-cb}
\end{figure}


\section{Setting}
\label{sec:causal-model-doubly}


First, we fix some notations for the causal inference problem considered in this paper.
We follow the potential outcome language of
\citet{neyman1923applications} and \citet{Rubin1974}.
In this causal model, each unit $i$
is associated with a pair of potential outcomes: the response $Y_i(1)$
that is realized if $T_i = 1$ (treated), and another response $Y_i(0)$
realized if $T_i =0$ (control). We assume the observational units are
independent and identically distributed samples from a population, for
which we wish to infer the treatment's effect.
The main obstacle is that only one potential outcome is
observed: $Y_i = T_i Y_i(1) - (1-T_i) Y_i(0)$, which is commonly known
as the ``fundamental problem of causal inference" \citep{Holland1986}.

In this paper we focus on estimating the Population Average Treatment effect on the Treated (PATT):
\begin{equation}
\label{eq:patt}
\gamma = \mathrm{E}[Y(1)|T=1] - \mathrm{E}[Y(0)|T=1]
\overset{\Delta}{=} \mu(1|1) - \mu(0|1).
\end{equation}
The counterfactual mean $\mu(0|1) = \mathrm{E}[Y(0)|T=1]$ also naturally occurs in survey sampling with missing
data \citep{deville1992calibration,sarndal2005estimation} by viewing $Y(0)$
as the only outcome of interest (so $T=1$ stands for non-response).

Along with the treatment exposure $T_i$ and outcome $Y_i$, each
unit $i$ is usually associated with a set of
covariates denoted by $X_i$ measured prior to the treatment
assignment. In a typical observational study, both treatment
assignment and outcome may be related to the covariates, which can
cause serious confounding bias. The seminal
work by \citet{rosenbaum1983} suggest that it is possible to correct
the confounding bias under the following two assumptions:

\begin{assumption}{1}{strong ignorability}
  \label{assump:str-ign}
  $(Y(0), Y(1)) \perp T~|~X$.
\end{assumption}
\begin{assumption}{2}{overlap}
  \label{assump:overlap}
  $0 < \mathrm{P}(T=1|X) < 1$.
\end{assumption}

Intuitively, the first assumption says that the observed covariates
contain all the information that may cause the selection bias,
i.e.\ there is no unmeasured confounding variable, and the
second assumption ensures that the bias-correction information is available
across the entire domain of $X$. 

Since the covariates $X$ contain all the information of
confounding bias, it is important to understand the relationship between
$T,Y$ and $X$. Under Assumption \ref{assump:str-ign}, the joint
distribution of $(X,Y,T)$ is determined by the marginal
distribution of $X$ and two conditional distributions given $X$. The first
conditional distribution $e(X) = P(T=1|X)$ is often called the
propensity score and plays a central role in causal
inference \citep{rosenbaum1983}. The second conditional distribution is
the density of $Y(0)$ and $Y(1)$ given $X$. Since we only consider the
mean causal effect in this paper, it suffices to study the mean regression
functions $g_0(X) = E[Y(0)|X]$ and $g_1(X) = E[Y(1)|X]$.

To estimate the PATT defined in \eqref{eq:patt}, a conventional weighting estimator based on the
propensity score is the inverse probability weighting (IPW) defined as
\begin{align} \label{eq:gamma-ipw}
  \hat{\gamma}^{\textrm{IPW}} &= \sum_{T_i = 1} \frac{1}{n_1} Y_i -
  \sum_{T_i = 0} \frac{ \hat{e}(X_i) (1 - \hat{e}(X_i))^{-1}}{\sum_{T_i = 0}
    \hat{e}(X_i) (1 - \hat{e}(X_i))^{-1}} Y_i.
\end{align}
Here $\sum_{T_i = t}$ is a short-hand notation of summation over all units $i$ such
that $T_i = t$. This  will be repeatedly used throughout this paper.
In \eqref{eq:gamma-ipw}, the control units are weighted proportionally
to $\hat{e}(X_i) (1 - \hat{e}(X_i))^{-1}$ to resemble the full
population. The most popular choice of propensity score model is the
logistic regression, where
$\mathrm{logit}(e(x)) = \log [e(x) / (1 - e(x))]$ is modeled by
$\sum_{j=1}^p \theta_j c_j(x)$ and $c_j(x)$ are functions of the
covariates.

\section{Entropy Balancing}
\label{sec:entropy-balancing}

Entropy Balancing (EB) is an alternative weighting method proposed by \citet{Hainmueller2011} to estimate PATT.
EB operates by maximizing the entropy of the weights
under some pre-specified balancing constraints:
\begin{equation}
  \begin{aligned}
  \underset{w}{\mathrm{maximize}} \quad & - \sum_{T_i = 0} w_i \log w_i  \\
  \mathrm{subject~to} \quad & \sum_{T_i = 0} w_i c_j(X_i) =
  \bar{c}_j(1) = \frac{1}{n_1} \sum_{T_i = 1} c_j(X_i), ~ j =
  1, \ldots, p,\\
                            & \sum_{T_i = 0} w_i = 1, \\
                            & w_i > 0,~ i=1,\dotsc,n.
  \end{aligned}
  \label{eq:orig-eb}
\end{equation}
\citet{Hainmueller2011} proposes to use the weighted average
$\sum_{T_i = 0} w_i^\textrm{EB} Y_i$ to estimate the counterfactual
mean $\mathrm{E}[Y(0)|T=1]$. This gives the Entropy Balancing estimator of PATT
\begin{equation}
  \label{eq:eb-patt}
  \hat{\gamma}^{\textrm{EB}} = \sum_{T_i = 1} \frac{Y_i}{n_1} - \sum_{T_i = 0} w_i^{\textrm{EB}}
Y_i.
\end{equation}

The functions $\{c_j(\cdot)\}_{j=1}^p$ in \eqref{eq:orig-eb} are called moment functions of
the covariates. They can be any transformation of $X$, not necessarily polynomial
functions. We use $c(X)$ and $\bar{c}(1)$ to stand for
the vector of $c_j(X)$ and $\bar{c}_j(1)$, $j=1,\ldots,p$. We shall
see the functions $\{c_j(\cdot)\}_{j=1}^p$ indeed serve as the linear predictors in the propensity score
model and the outcome regression model, although at
this point it is not even clear that EB attempts to fit any model.

First, we give some heuristics that allows us to view EB as a propensity score
weighting method. Since EB seeks to empirically match the control and
treatment covariate distributions, we connect EB with density
estimation. 
Let $m(x)$ be the density function of the
covariates $X$ for the control
population. The minimum relative entropy principle estimates the
density of the treatment population by
\begin{equation}
  \label{eq:rel-ent}
  \underset{\tilde{m}}{\mathrm{maximize}}~H(\tilde{m} \| m) \quad
  \mathrm{subject~to} ~ \mathrm{E}_{\tilde{m}}[c(X)] = \bar{c}(1),
\end{equation}
where $H(\tilde{m}\|m) = \mathrm{E}_{\tilde{m}}[\log
(\tilde{m}(X)/m(X))]$ is the relative entropy between $\tilde{m}$ and
$m$.
As an estimate of the
distribution of the treatment group, the optimal $\tilde{m}$
of \eqref{eq:rel-ent} is the ``closest'' to the control
distribution among all distributions satisfying the moment
constraints. Now let $ w(x) = [\mathrm{P}(T=1) \cdot \tilde{m}(x)]/[
  \mathrm{P}(T=0) \cdot m(x)]$ be the population version of the inverse probability weights in
\eqref{eq:gamma-ipw}. Applying a change of measure, we can rewrite
\eqref{eq:rel-ent} as an optimization problem over $w(x)$:
\begin{equation}
  \label{eq:eb-pop}
  \underset{w}{\mathrm{maximize}}~\mathrm{E}_m[w(X) \log w(X)] \quad
  \mathrm{subject~to} ~ \mathrm{E}_{m}[w(X) c(X)] = \bar{c}(1).
\end{equation}
The EB optimization problem \eqref{eq:orig-eb} is the finite
sample version of \eqref{eq:eb-pop}, where the population distribution $m$ is
replaced by the empirical distribution of the control units.

Using the Lagrangian multipliers, one can show the solution
of \eqref{eq:rel-ent} belongs to the family of exponential titled
distributions of $m$ \citep{cover2012}:
\[
m_{\theta}(x) = m(x) \exp(\theta^T c(x) - \psi(\theta)).
\]
Here, $\psi(\theta)$ is the moment generating function of this
exponential family. Consequently, the solution of the population EB \eqref{eq:eb-pop} is
\begin{equation*} \label{eq:logistic-reg}
\frac{e(x)}{1-e(x)} = \frac{\mathrm{P}(T=1|X=x)}{\mathrm{P}(T=0|X=x)} = w(x) = \exp( \alpha
+ \theta^T c(x)),
\end{equation*}
where $\alpha =
\log(\mathrm{P}(T=1)/\mathrm{P}(T=0))$. This is exactly the logistic
regression model with predictors $c(x)$.

Notice that EB is different from the maximum likelihood
fit of the logistic regression. The dual optimization problem of
\eqref{eq:orig-eb} is
\begin{equation}
\label{eq:orig-eb-dual}
\underset{\theta}{\mathrm{minimize}} \quad \log \left(\sum_{T_i=0} \exp\bigg(\sum_{j=1}^p \theta_j
  c_j(X_i)\bigg)\right) - \sum_{j=1}^p \theta_j \bar{c}_j(1),
\end{equation}
whereas the maximum likelihood solves
\begin{equation}
  \label{eq:logistic}
  \underset{\theta}{\mathrm{minimize}} \quad \sum_{i=1}^n \log\left(1 +
  \exp\bigg(-(2 T_i - 1) \sum_{j=1}^p \theta_j c_j(X_i)\bigg) \right).
\end{equation}
It is apparent from \eqref{eq:orig-eb-dual} and \eqref{eq:logistic}
that EB and maximum likelihood use different loss functions. As a
remark, the estimating equations defined by \eqref{eq:orig-eb-dual}
are used to augment the estimating equations defined by
\eqref{eq:logistic} in the covariate balancing propensity score (CBPS)
approach of \citet{Imai2014}. We will compare the empirical
performance of these methods in Section \ref{sec:simulations}.

The optimization problem \eqref{eq:orig-eb-dual} is
strictly convex and the unique solution $\hat{\theta}^{\mathrm{EB}}$
can be efficiently computed by Newton method.
The EB weights (solution to the primal problem \eqref{eq:orig-eb}) are
given by the Karush-Kuhn-Tucker (KKT) conditions: for any $i$ such
that $T_i = 0$,
\begin{equation}
  \label{eq:orig-eb-link}
  w_i^{\textrm{EB}} = \frac{\exp\left(\sum_{j=1}^p \hat{\theta}_j^{\textrm{EB}} c_j(X_i)\right)}{\sum_{T_i=0}\exp\left(\sum_{j=1}^p
    \hat{\theta}_j^{\textrm{EB}} c_p(X_i)\right)}.
\end{equation}

As a final remark, Entropy Balancing bridges two existing approaches of
estimating the mean causal effect:
\begin{enumerate}
\item The calibration estimator that is very popular in survey sampling \citep{deville1992calibration,sarndal2005estimation,chan2015};
\item The empirical likelihood approach that significantly advances
  the theory of doubly robust estimation in observation study \citep{Wang2002,Tan2006,Qin2007,Tan2010}.
\end{enumerate}
EB is a special case of these two approaches. The main distinction is that it uses the Shannon entropy $\sum_{i=1}^n w_i \log w_i$ as the discrepancy
function, resulting in an easy-to-solve convex optimization. Due to
its easy interpretation, Entropy Balancing has already
gained some ground in practice \citep[e.g.][]{Marcus2013,Ferwerda2014}.

\section{Properties of Entropy Balancing}
\label{sec:theoretical-properties}

We give some theoretical guarantees of Entropy Balancing to justify
its usage in real applications. The following is the main theorem of this paper,
which shows EB is doubly robust even though its original form
\eqref{eq:orig-eb} does not contain a propensity score model or a
outcome regression model.

\begin{theorem}
  \label{thm:main}
  Let Assumption \ref{assump:str-ign} and Assumption
  \ref{assump:overlap} be given. Additionally, assume the expectation of
  $c(x)$ exists and $\mathrm{Var}(Y(0)) < \infty$. Then Entropy
  Balancing is doubly robust (Property \ref{property:dr}) in the sense that
  \begin{enumerate}
  \item If $\mathrm{logit}(e(x))$
  or $g_0(x)$ is linear in $c_j(x),~ j=1,\ldots,R$, then
  $\hat{\gamma}^\mathrm{EB}$ is statistically consistent.
  \item Moreover, if $\mathrm{logit}(e(x))$, $g_0(x)$
    and $g_1(x)$ are all linear in $c_j(x),~ j=1,\ldots,R$,
  then $\hat{\gamma}^{\mathrm{EB}}$ reaches the semiparametric
  variance bound of $\gamma$ derived in \citet[Theorem 1]{Hahn1998} with unknown
  propensity score.
  \end{enumerate}
\end{theorem}



We give two proofs of the first claim in Theorem \ref{thm:main}. The
first proof reveals an interesting connection between the
primal-dual optimization problems \eqref{eq:orig-eb} and
\eqref{eq:orig-eb-dual} and the statistical property, double
robustness, which motivates the interpretation in Figure
\ref{fig:ps-or-cb}. The second proof uses a stabilization trick in \citet{Robins2007}.

\begin{sketchproof}[First proof (sketch)]
The consistency under the linear model of
$\mathrm{logit}(\mathrm{P}(T=1|X))$ is a consequence of the dual
optimization problem \eqref{eq:orig-eb-dual}. See Section
\ref{sec:entropy-balancing} for a heuristic justification via the
minimum relative entropy principle and Appendix
\ref{sec:proof-theor-refthm} for a rigorous proof by using the
M-estimation theory.

The consistency under the linear model of $Y(0)$ can be proved by
expanding $\mathrm{E}[Y(0)|X]$ and $\sum_{T_i=0}w_i Y_i$. Here we
provide an indirect proof by showing that augmenting EB with a linear
outcome regression does not change the estimator. Given an estimated
propensity score model $\hat{e}(x)$, the corresponding weights
$\hat{e}(x) / (1 - \hat{e}(x))$ for the control units, and an
estimated outcome regression model $\hat{g}_0(x)$, a doubly robust
estimator of PATT is given by
\begin{equation}
  \label{eq:eb-dr}
  \hat{\gamma}^{\textrm{DR}} = \sum\limits_{T_i = 1} \frac{1}{n_1}
  (Y_i - \hat{g}_0(X_i)) - \sum\limits_{T_i = 0} \frac{\hat{e}(X_i)}{1
  - \hat{e}(X_i)} (Y_i - \hat{g}_0(X_i)).
\end{equation}
This estimator satisfies Property \ref{property:dr}, i.e.\ if
$\hat{e}(x) \to e(x)$ or $\hat{g}_0(x) \to g(x)$, then
$\hat{\gamma}^{\textrm{DR}}$ is statistically consistent for
$\gamma$. To see this, in the case that $\hat{g}_0(x)
\to g_0(x)$, the first sum in \eqref{eq:eb-dr} is consistent
for $\gamma$ and the second sum in \eqref{eq:eb-dr} has mean going to
$0$ as $n \to \infty$. In the case where $\hat{g}_0(x)
\not \to g_0(x)$ but $\hat{e}(x) \to e(x)$, the second sum in
\eqref{eq:eb-dr} is consistent for the bias of the first sum (as an
estimator of $\gamma$).

When the estimated propensity score model $\hat{e}(x)$ is obtained by
the EB dual problem \eqref{eq:orig-eb-dual} and the estimated outcome
regression model is $\hat{g}_0(x) = \sum_{j=1}^p \hat{\beta}_j
c_j(x)$, we have
\[
\begin{aligned}
\hat{\gamma}^{\textrm{DR}}
  - \hat{\gamma}^{\textrm{EB}}
&= \sum_{T_i=0} w_i^{\textrm{EB}} \hat{g}_0(X_i) - \frac{1}{n_1} \sum_{T_i=0} \hat{g}_0(X_i) \\
 &= \sum_{T_i = 0} w_i^{\textrm{EB}} \sum_{j=1}^{p} \hat{\beta}_j c_j(X_i)
  - \frac{1}{n_1} \sum_{T_i = 1} \sum_{j=1}^p\hat{\beta}_j c_j(X_i) \\
  &=\sum_{j=1}^p \hat{\beta}_j\left( \sum_{T_i = 0} w_i^{\textrm{EB}}
    c_j(X_i) - \frac{1}{n_1} \sum_{T_i=1} c_j(X_i) \right) \\
  &= 0.
\end{aligned}
\]
Therefore by enforcing covariate balancing constraints, EB implicitly
fits a linear outcome regression model and is consistent for $\gamma$
under this model.
\end{sketchproof}

\begin{sketchproof}[Second proof]
This proof is pointed out by an anonymous reviewer. In a
discussion of \citet{Kang2007}, \citet{Robins2007} show that one
can stabilize the standard doubly robust estimator in a number of
ways. Specifically, one trick suggested by \citet[Section 4.1.2]{Robins2007} is to
estimate the propensity score, say $\tilde{e}(x)$, by the following estimating equation
\begin{equation}
  \label{eq:eb-est-eq}
  \sum_{i=1}^n \left[ \frac{(1-T_i) \tilde{e}(X_i) / (1 -
      \tilde{e}(X_i))}{\sum_{i=1}^n (1-T_i) \tilde{e}(X_i) / (1
        - \tilde{e}(X_i))} - \frac{T_i}{\sum_{i=1}^n T_i} \right]
  \hat{g}_0(X_i) = 0.
\end{equation}
Then one can estimate PATT by the IPW estimator \eqref{eq:gamma-ipw}
by replacing $\hat{e}(X_i)$ with $\tilde{e}(X_i)$. This estimator is
sample bounded (the estimator is always within the range of observed
values of $Y$) and doubly robust with respect to the parametric
specifications of $\tilde{e}(x) = \tilde{e}(x;\theta)$ and
$\hat{g}_0(x) = \hat{g}_0(x;\beta)$. The only problem with
\eqref{eq:eb-est-eq} is it may not have a unique solution. However,
when $\mathrm{logit}(e(x))$ and $g_0(x)$ are assumed linear in $c(x)$,
\eqref{eq:eb-est-eq} corresponds to the first order condition of the
EB dual problem \eqref{eq:orig-eb-dual}. Since \eqref{eq:orig-eb-dual}
is strictly convex, it has an unique solution and
$\tilde{e}(X;\theta)$ is the same as the EB estimate
$\hat{e}(X;\theta)$. As a consequence, $\hat{\gamma}^{\textrm{EB}}$ is
also doubly robust.
\end{sketchproof}

To prove the second claim in Theorem \ref{thm:main}, we compute the
asymptotic variance of $\hat{\gamma}^{EB}$ using the M-estimation theory. 
To state our
results, we need to introduce four differently weighted
covariance-like functions for two random vectors
$a_1$ and $a_2$ of length $p$:
\begin{equation*}
\begin{aligned}
H_{a_1,a_2} &= \mathrm{Cov}(a_1, a_2|T=1), \\
G_{a_1,a_2} &= \mathrm{E}\left[ \frac{e(X)}{1-e(X)} (a_1 - \mathrm{E}[a_1|T=1])
  (a_2 - \mathrm{E}[a_2|T=1])^T\middle|T=1\right], \\
K_{a_1, a_2} &= \mathrm{E}[(1-e(X)) a_1 a_2^T | T = 1], \\
K^m_{a_1, a_2} &= \mathrm{E}[(1-e(X)) a_1 (a_2 - \mathrm{E}[a_2|T=1]) ^T | T = 1].
\end{aligned}
\end{equation*}

It is obvious that $H \ge
K$ and usually $G \ge H$. To make the notation more concise, $c(X)$
will be abbreviated as $c$ and $Y(0)$ as $0$ in subscripts. For
example, $H_{c,0} = H_{c(X),Y(0)}$, $G_{c,1} = G_{c(X),Y(1)}$ and $K_c = K_{c(X),c(X)}$.

\begin{theorem}
  \label{thm:variance}
  Assume the logistic regression model of propensity score is correct,
  i.e.\ $\mathrm{logit}(\mathrm{P}(T=1|X))$ is a linear combination of
  $\{c_j(X)\}_{j=1}^p$. Let $\pi = \mathrm{P}(T=1)$, then we have $\hat{\gamma}^{\mathrm{EB}}
  \overset{d}{\to} \mathrm{N}(\gamma,V^{\mathrm{EB}}/n)$ and $\hat{\gamma}^{\mathrm{IPW}}
  \overset{d}{\to} \mathrm{N}(\gamma,V^{\mathrm{IPW}}/n)$ where
\begin{align}
\label{eq:eb-var}
V^{\mathrm{EB}} &= \pi^{-1} \cdot \left\{ H_{1} +  G_{0} -
  H_{c,0}^T H_{c}^{-1} \left( 2 G_{c,0} - H_{c,0} - G_c
  H_{c}^{-1} H_{c,0} + 2 H_{c,1}\right)  \right\}, \\
\label{eq:ipw-var}
  V^{\mathrm{IPW}} &= \pi^{-1} \cdot \left\{H_{1} + G_{0} -
H_{c,0}^T K_{c}^{-1} \left( H_{c,0} - 2 K^m_{c,0} + 2 K^m_{c,1} \right)
\right\}.
\end{align}
\end{theorem}

The proof of Theorem \ref{thm:variance} is given in Appendix
\ref{sec:proof-theor-refthm}. The $H$, $G$ and $K$ matrices in Theorem \ref{thm:variance} can be
estimated from the observed data, yielding approximate sampling
variances for $\hat{\gamma}^{\textrm{EB}}$ and
$\hat{\gamma}^{\text{IPW}}$. Alternatively, variance estimates may be
obtained via the empirical sandwich method \citep[e.g.][]{Stefanski2002}. In practice (particularly in
simulations where we compare to a known truth), we find that the
empirical sandwich method is more stable than the plug-in method,
which is consistent with the suggestion in \citet{Lunceford2004} for PATE estimators.

To complete the proof of the second claim in Theorem \ref{thm:main},
we compare these variances with the semiparametric variance bound of
$\gamma$ with unknown $e(X)$ derived by \citet[Theorem 1]{Hahn1998}:
\begin{equation*}
V^{*}
= \frac{1}{\pi^2} \mathrm{E} \left[
e(X) \mathrm{Var}(Y(1)|X) + \frac{e(X)^{2}}{1 - e(X)} \mathrm{Var}(Y(0)|X) + e(X)
(g_1(X) - g_0(X) - \gamma)^2
\right]
\end{equation*}
After some algebra, one can express $V^{*}$ in terms of
$H_{\cdot,\cdot}$ and $G_{\cdot,\cdot}$ defined above:
\[
V^{*}= \pi^{-1} \cdot \left\{H_1 + G_0 - 2H_{g_0,g_1} - G_{g_0} + H_{g_0} \right\}.
\]
Now assume
$\mathrm{logit}(\mathrm{P}(T=1|X)) = \theta^T c(X)$ and
$\mathrm{E}[Y(t)|X] = \beta(t)^T c(X),~t=0,1$, it is easy to verify that
\[
\begin{aligned}
H_{c,t} &= \mathrm{Cov}(c(X),Y(t)|T=1) \\
&= \mathrm{Cov}(c(X), \beta(t)^T c(X)|T=1) \\
&= H_c \beta(t),~\mathrm{for}~t=0,1.
\end{aligned}
\]
Similarly, $G_{c,t} = G_c \beta(t),~t=0,1$. From here it is easy to
check $V^{\textrm{EB}}$ and $V^{*}$ are the same. Since Entropy
Balancing reaches the efficiency bound in this case, obviously
$V^{\textrm{EB}} < V^{\textrm{IPW}}$ when both models are linear.

If $\mathrm{logit}(\mathrm{P}(T=1|X)) = \theta^T c(X)$ is true
but $\mathrm{E}[Y(t)|X] = \beta(t)^T c(X)$ is not true for some $t =
0,1$, there is no guarantee that EB has the smaller asymptotic
variance. In practice, the features $c(X)$ in the
outcome regression models are almost always correlated with $Y$. This correlation
compensates the slight efficiency loss of not maximizing the likelihood
function in logistic regression. As a consequence, the variance
$V^{\textrm{EB}}$ in \eqref{eq:eb-var} is usually smaller than
$V^{\textrm{IPW}}$ in \eqref{eq:ipw-var}. This efficiency advantage of
EB over IPW is verified in the next section using simulations.

\section{Simulations}
\label{sec:simulations}

\subsection{Kang-Schafer Example}
\label{sec:kang-schafer-example}

We use the simulation example in \citet{Kang2007} to compare EB
weighting with IPW (after maximum likelihood logistic regression) and
the over-identified Covariate Balancing
Propensity Score (CBPS) proposed by \citet{Imai2014}. The simulated data consist of $\{X_i, Z_i, T_i, Y_i\}, i=1,\ldots,n \}$. $X_i$ and $T_i$ are always observed, $Y_i$ is observed only if $T_i = 1$, and $Z_i$ is never observed. To generate this data set, $X_i$ is distributed as $\mathrm{N}(0,I_4)$, $Z_i$ is computed by first applying the following  transformation:
\begin{align*}
Z_{i1} &= \exp(X_{i1}/2), \\
Z_{i2} &= X_{i2}/(1 + \exp(X_{i1})) + 10, \\
Z_{i3} &= (X_{i1}X_{i3} + 0.6)^3, \\
Z_{i4} &= (X_{i2} + X_{i4} + 20)^2.
\end{align*}
Next we normalize each column such that $Z_i$ has mean $0$ and standard deviation $1$.

In one setting, $Y_i$ is generated by $Y_i = 210 + 27.4 X_{i1} + 13.7 X_{i2} + 13.7 X_{i3} + 13.7 X_{i4} + \epsilon_i$, $\epsilon_i \sim \mathrm{N}(0,1)$ and the true propensity scores are $e_i = \mathrm{expit}(-X_{i1} + 0.5 X_{i2} - 0.25 X_{i3} -0.1 X_{i4})$. In this case, both $Y$ and $T$ can be correctly modeled by (generalized) linear model of the observed covariates $X$.

In the other settings, at least one of the propensity score model and the outcome regression model is incorrect. In order to achieve this, the data generating process described above is altered such that $Y$ or $T$ (or both) is linear in the unobserved $Z$ instead of the observed $X$, though the parameters are kept the same.

For each setting ($4$ in total), we generated $1000$ simulated data
sets of size $n=200$ and $1000$, then apply various methods discussed
earlier including
\begin{enumerate}
\item{IPW, CBPS}: the IPW estimator in \eqref{eq:gamma-ipw} with propensity
  score estimated by logistic regression or CBPS (since the estimand is
  overall mean, we use the CBPS weights tailored for estimating PATE);
\item{EB}: the Entropy Balancing estimator (the EB weights are used to
  estimate the unobserved mean $\mathrm{E}[Y|T = 0]$);
\item{IPW+DR, CBPS+DR}: the doubly robust estimator in
  \eqref{eq:eb-dr} with propensity score estimated by logistic
  regression or CBPS.
\end{enumerate}

\begin{figure}[htbp]
  \centering
  \begin{subfigure}[b]{0.8\textwidth}
  \includegraphics[width = \textwidth]{./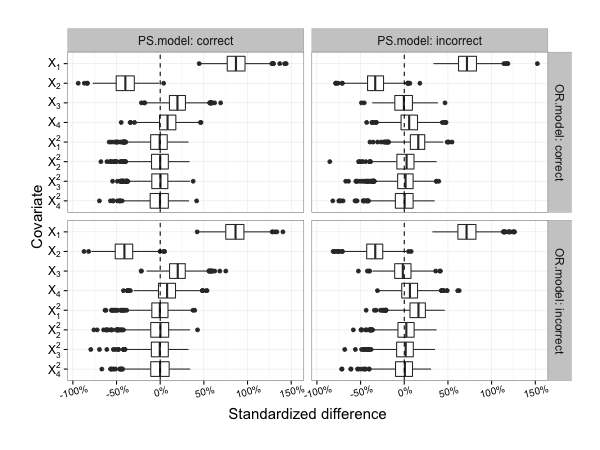}
  \caption{Covariate imbalance before adjustment.}
  \label{fig:KS-imba-n200}
  \end{subfigure}
  \begin{subfigure}[b]{0.8\textwidth}
  \includegraphics[width = \textwidth]{./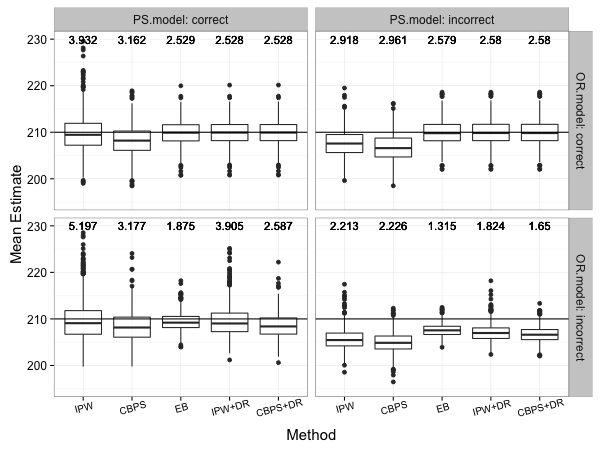}
  \caption{Mean estimates. The methods are: Inverse Propensity Weighting (IPW), Covariate Balancing Propensity Score (CBPS), Entropy Balancing (EB), and doubly robust versions of the first two (IPW+DR, CBPS+DR). Target mean is $210$ and is marked as a black
  horizontal line to compare the biases. Numbers
  printed at $Y=230$ are the sample standard deviations to compare efficiency.}
  \label{fig:KS-mean-n200}
  \end{subfigure}
  \caption{Kang-Schafer example: sample size $n = 200$. Both propensity score
    model and outcome regression model can be correct or incorrect, so
    there are four scenarios in total. We generate 1000 simulations in
  each scenario.}
  \label{fig:KS-n200}
\end{figure}

\begin{figure}[htbp]
  \centering
  \begin{subfigure}[b]{0.8\textwidth}
  \includegraphics[width = \textwidth]{./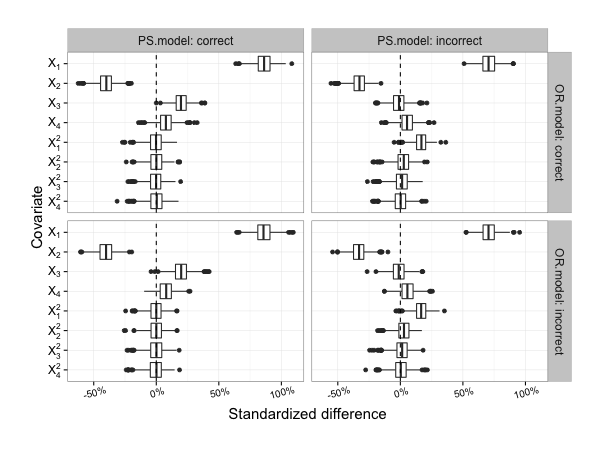}
  \caption{Covariate imbalance before adjustment.}
  \label{fig:KS-imba-n1000}
  \end{subfigure}
  \begin{subfigure}[b]{0.8\textwidth}
  \includegraphics[width = \textwidth]{./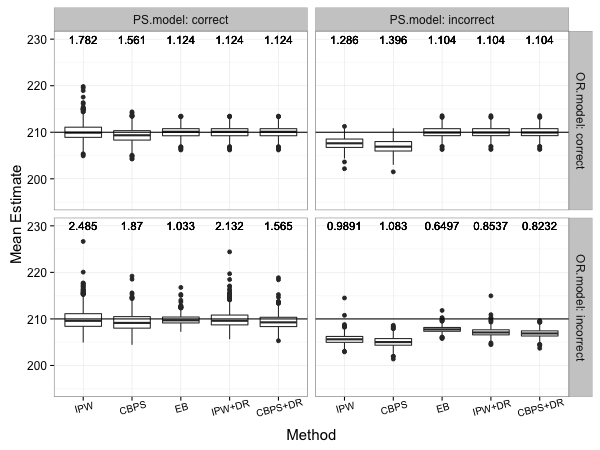}
  \caption{Mean estimates. The methods are: Inverse Propensity Weighting (IPW), Covariate Balancing Propensity Score (CBPS), Entropy Balancing (EB), and doubly robust versions of the first two (IPW+DR, CBPS+DR). Target mean is $210$ and is marked as a black
  horizontal line to compare the biases. Numbers
  printed at $Y=230$ are the sample standard deviations to compare efficiency.}
  \label{fig:KS-mean-n1000}
  \end{subfigure}
  \caption{Kang-Schafer example: sample size $n = 1000$. Both propensity score
    model and outcome regression model can be correct or incorrect, so
    there are four scenarios in total. We generate 1000 simulations in
  each scenario.}
  \label{fig:KS-n1000}
\end{figure}

The simulation results are presented in Figures \ref{fig:KS-n200} and
\ref{fig:KS-n1000}. Figures \ref{fig:KS-imba-n200} and
\ref{fig:KS-imba-n1000} show the covariate imbalance before adjustment
in terms of standardized difference. Figures \ref{fig:KS-mean-n200}
and \ref{fig:KS-mean-n1000} show the mean estimates given by the five
different methods. First, notice that the doubly robust estimator
``IPW+DR" performs poorly when both models are misspecified
(bottom-right panel in Figures \ref{fig:KS-mean-n200} and
\ref{fig:KS-mean-n1000}). In fact,
all the three doubly robust methods are worse than just using
IPW. Second, the three doubly robust estimators have exactly the same
variance if the $Y$ model is correct (top two panels in Figures
\ref{fig:KS-mean-n200} and \ref{fig:KS-mean-n1000}). It seems that how
one fits the propensity score model
has no impact on the final estimate. This is related to the
observation in \citet{Kang2007} that, in this example, the plain OLS
estimate of $Y$ actually outperforms any method involving the
propensity score model. Discussion articles such as \citet{Robins2007} and
\cite{Ridgeway2007} find this phenomenon very uncommon in practice and
is most likely due to the estimated inverse probability weights are
highly variable, which is a bad setting for doubly robust estimators.

Regarding Entropy Balancing (EB), we find that:
\begin{enumerate}
\item If both $T$ and $Y$ models are misspecified, EB
  has smaller bias than the conventional ``IPW+DR" or ``CBPS+DR". So
  EB seems to be less affected by such
  unfavorable setting.
\item When $T$ model is correct but $Y$ model is wrong (bottom-left
  panel in Figures \ref{fig:KS-n200} and \ref{fig:KS-n1000}), EB has the smallest
  variance among all estimators. This supports the conclusion of our
  efficiency comparison of IPW and EB in Section
  \ref{sec:theoretical-properties}.
\end{enumerate}

Finally notice that the same simulation setting is used in
\citet{Tan2010} to study the performance of a number of
doubly robust estimators. The reader can
compare the Figures \ref{fig:KS-n200} and \ref{fig:KS-n1000} with the
results there. The performance of Entropy Balancing is comparable to
the best estimator in \citet{Tan2010}.

\subsection{Lunceford-Davidian Example}
\label{sec:lunc-david-example}

We provide another simulation example by
\citet{Lunceford2004} to verify claims in Theorems \ref{thm:main} and
\ref{thm:variance}. In this simulation, the data still consist of
$\{(X_i, Z_i, T_i, Y_i), i=1,\ldots,n \}$, but all of them are
observed. Both $X_i$ and $Z_i$ are three dimensional vectors. The
propensity score is only related to $X$ through:

$$\mathrm{logit}(\mathrm{P}(T_i=1)) = \beta_0 + \sum_{j=1} \beta_j
X_{ij}.$$
Note the above does not involve elements of $Z_i$. The response $Y$ is
generated according to
$$Y_i = \nu_0 + \sum_{j=1}^{3} \nu_j X_{ij} +
\nu_4 T_i +
\sum_{j=1}^3 \xi_j Z_{ij} + \epsilon_i; \ \epsilon_i \sim
\mathrm{N}(0,1).$$ The parameters here are set to be
$\nu = (0, -1, 1, -1, 2)^T$, and $\beta$ is set as:
\begin{align*}
\beta^{\textrm{no}} &= (0, 0, 0, 0)^T, \\
\beta^{\textrm{moderate}} &= (0, 0.3, -0.3, 0.3)^T, \  \mbox{or}\\
\beta^{\textrm{strong}} &= (0, 0.6, -0.6, 0.6)^T.
\end{align*}
The choice of $\beta$ depends on the
level of association of $T$ and $X$. $\xi$ is based on a similar choice on the level of association of $Y$ and $Z$:
\begin{align*}
\xi^{\textrm{no}} &= (0, 0, 0)^{T},  \\
\xi^{\textrm{moderate}} &= (-0.5, 0.5, 0.5)^{T}, \ \mbox{or} \\
\xi^{\textrm{strong}} &= (-1, 1, 1)^{T}.
\end{align*}
The joint distribution of $(X_i, Z_i)$ is specified by taking $X_{i3} \sim
\mathrm{Bernoulli}(0.2)$ and then generate $Z_{i3}$ as Bernoulli with
$$\mathrm{P}\left(Z_{i3}=1|X_{i3}\right) = 0.75 X_{i3} + 0.25(1 -
X_{i3}).$$
Conditional on $X_{i3}$, $(X_{i1},Z_{i1},X_{i2},Z_{i2})$ is
then generated as multivariate normal
$\mathrm{N}(a_{X_{i3}},B_{X_{i3}})$, where $a_1 = (1, 1, -1, -1)^T$,
$a_0 = (-1, -1, 1, 1)^{T}$ and
$$B_0 = B_1 =
\begin{pmatrix}
  1 & 0.5 & -0.5 & -0.5 \\
  0.5 & 1 & -0.5 & -0.5 \\
 -0.5 & -0.5 & 1 & 0.5 \\
 -0.5 & -0.5 & 0.5 & 1 \\
\end{pmatrix}.
$$
Figure \ref{fig:ld-imba} shows the covariate imbalance in the three
settings before adjustment.

\begin{figure}[ht]
  \centering
    \includegraphics[width = 0.8\textwidth]{./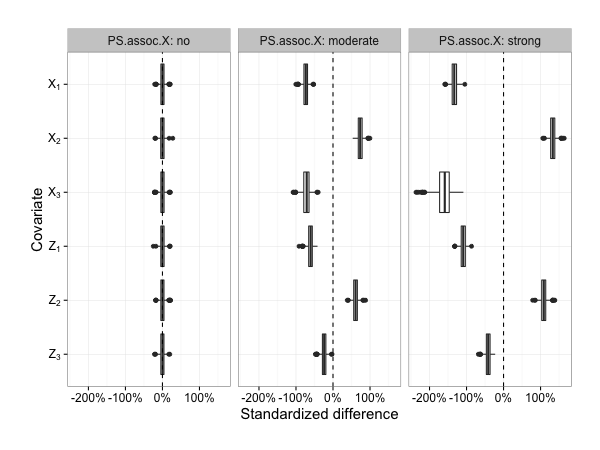}
  \caption{Lunceford-Davidian Example: covariate imbalance before adjustment.}
  \label{fig:ld-imba}
\end{figure}

The data generating model implies that the true PATT is $\gamma = 2$. Since the
outcome $Y$ depends on both $X$ and $Z$, we always fit a full linear
model of $Y$ using $X$ and $Z$, if such model is needed. $T$
only depends on $X$, so it is not necessary to include $Z$ in
propensity score modeling. However, as pointed out by
\citet[Sec. 3.3]{Lunceford2004}, it is actually beneficial to
``overmodel" the propensity score by including $Z$ in the model. Here
we will try both possibilities, the ``full" modeling of $T$ using both
$X$ and $Z$, and the ``partial" modeling of $T$ using only $X$. Since
the estimand is PATT in this case, we use the over-identified CBPS
weights tailored for estimating PATT.

\begin{figure}[ht]
  \centering
  \includegraphics[width = \textwidth]{./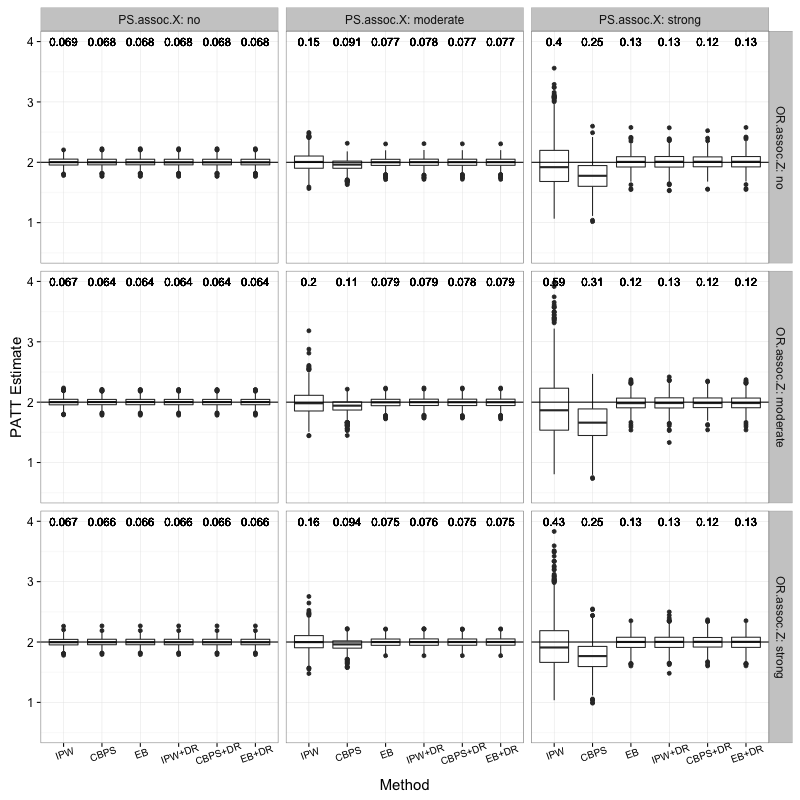}
  \caption{Results of the Lunceford-Davidian example (full
    propensity score modeling). The propensity score model and outcome
  regression model, if applies, are always correctly specified, but
  the level of association between $T$ or $Y$ with $X$ or $Z$ could be
 different, ended up with $9$ different scenarios. $X$ are confounding
 covariates and $Z$ only affects the outcome. We generate $1000$
 simulations of $1000$ in each scenario and apply five
  different estimators. The true PATT is $2$ and is marked as a black
  horizontal line to compare the biases of the methods. Numbers
  printed at $Y=5$ are the sample standard deviation of each method,
in order to compare their efficiency. }
  \label{fig:LD-full}
\end{figure}

\begin{figure}[htbp]
  \centering
  \includegraphics[width = \textwidth]{./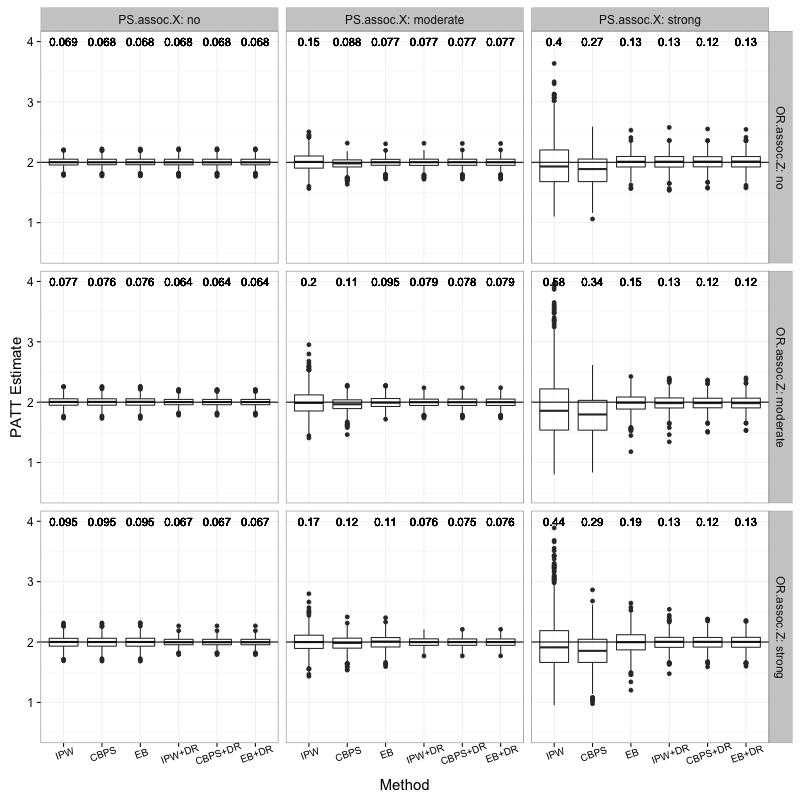}
  \caption{Results of the Lunceford-Davidian example (partial propensity
    score modeling). The settings are exactly the same as Figure
    \ref{fig:LD-full} except the methods here don't use $Z$ in their
    propensity score models.}
  \label{fig:LD-partial}
\end{figure}

We generated $1000$ simulated data sets and the results are shown in Figure \ref{fig:LD-full} for ``full"
propensity score modeling and Figure \ref{fig:LD-partial} for
``partial" propensity score modeling. We make the following
comments about these two plots:
\begin{enumerate}
\item IPW and all other estimators are always
  consistent, no matter what level of association is specified. This
  is because the propensity score model is always correctly specified.
\item When using the ``full'' propensity score modeling, all doubly robust
  estimators (EB, IPW+DR, CBPS+DR and EB+DR) have almost the same sample
  variance. This is because all of them are asymptotically efficient.
\item CBPS, to our surprise, does not perform very well in this
  simulation. It has smaller variance than IPW but this comes with the price of
  some bias. If we use the partial propensity
  score model (only involve $X$, Figure \ref{fig:LD-partial}), this
  bias is smaller but still not negligible. While it is not clear
  what causes this bias, one possible reason is that the
  optimization problem of CBPS is nonconvex, so the local solution
  which is used to construct $\gamma$ estimator could be far from the
  global solution. Another possibility is that CBPS uses GMM or
  Empirical Likelihood to combine likelihood with imbalance penalty,
  which is less efficient than maximum likelihood directly. Thus, although
  the estimator is asymptotically unbiased, the convergence spend to
  the true $\gamma$ is quite slower than IPW. CBPS combined with
  outcome regression (CB+DR) fixes the bias and inefficiency issue occurred in CBPS without outcome regression.
\item EB, in contrast, performs quite well in this
  simulation. It has relatively small variance, particularly if we
  use the ``full" model in which\ both $X$ and $Z$ are balanced.
\item The difference between EB and EB+DR is that while EB only
  balances ``partial" or ``full" covariates, EB+DR additionally
  combines a outcome linear regression model on all the covariates. As
  shown in the first proof of Theorem \ref{thm:main}, when the ``full" covariates are used, EB is exactly the same as EB+DR. We can observe this from Figure \ref{fig:LD-full}. When EB only balances ``partial" covariates, the two methods are different and indeed EB+DR is more efficient in Figure \ref{fig:LD-partial} since it fits the correct $Y$ model.
\item Using the ``full" propensity score model improves the efficiency
  of pure weighting estimators (IPW, CBPS and EB) a lot, but has very
  little impact on estimators that involves an outcome regression model
  (IPW+DR and CBPS+DR) compared to "partial" propensity score modeling. Although EB could be viewed as fitting a
  outcome model implicitly,
  the "partial" EB estimator only uses $X$ in the outcome model,
  that is precisely the reason why it is not efficient. Thus there are
  both robustness and efficiency reasons that one should include all
  relevant covariates in EB, even if the covariates
  affect only one of $T$ and $Y$.
\end{enumerate}

In summary, EB outperforms IPW in all the simulations, making it an
appealing alternative to the conventional propensity score weighting
methods.

\appendix

\section{Theoretical proofs}
\label{sec:proof-theor-refthm}

We first describe
the conditions under which the EB problem \eqref{eq:orig-eb} admits a
solution. The existence of $w^{\textrm{EB}}$ depends on the
solvability of the moment matching constraints
\begin{equation}
  \label{eq:moment-constraints}
   \sum_{T_i = 0} w_i c_j(X_i) = \bar{c}_j(1), ~ j =
  1, \ldots, p,~  w > 0, ~\sum_{T_i=0} w_i = 1.
\end{equation}
As one may expect, this is closely related to the existence condition of maximum likelihood estimate of logistic regression \citep{Silvapulle1981,Albert1984}.
An easy way to obtain such condition is through the dual problem of \eqref{eq:logistic}
\begin{equation}
  \label{eq:logistic-dual}
  \begin{aligned}
  \underset{w}{\mathrm{maximize}} \quad & - \sum_{i=1}^n \left[w_i \log w_i + (1 - w_i)
  \log (1 - w_i) \right]  \\
  \mathrm{subject~to} \quad &
  \sum_{T_i = 0} w_i c_j(X_i) = \sum_{T_i = 1} w_i c_j(X_i),~j =
  1,\ldots,p, \\
  & 0 < w_i < 1,~i=1,\dotsc,n.
  \end{aligned}
\end{equation}
Thus, the existence of $\hat{\theta}^{\textrm{MLE}}$ is equivalent to the
solvability of the constraints in \eqref{eq:logistic-dual}, which is the
overlap condition first given by \citet{Silvapulle1981}. 

Intuitively, in the space of $c(X)$, the solvability of
\eqref{eq:moment-constraints} or the existence of $w^{\textrm{EB}}$ means there is no hyperplane
separating $\{c(X_i)\}_{T_i = 0}$ and $\bar{c}(1)$. In contrast, the solvability of
\eqref{eq:logistic-dual} or the existence of $w^{\textrm{MLE}}$ means there is no hyperplane separating
$\{c(X_i)\}_{T_i = 0}$ and $\{c(X_i)\}_{T_i = 1}$. Hence the existence
of EB requires a stronger condition than the logistic
regression MLE.

The next proposition suggests that the existence of
$w^{\mathrm{EB}}$ and hence $w^{\mathrm{MLE}}$ is guaranteed by Assumption
\ref{assump:overlap} with high probability.
\begin{proposition}
  \label{prop:exist-whp}
  Suppose Assumption \ref{assump:overlap} is satisfied and the expectation of $c(X)$ exist, then $\mathrm{P}(w^{\mathrm{EB}}~\mathrm{exists})
  \to 1$ as $n \to \infty$. Furthermore, $\sum_{i=1}^n
  (w_i^{\mathrm{EB}})^2 \to 0$ in probability as $n \to \infty$.
\end{proposition}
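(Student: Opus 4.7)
The plan is to split the proposition into two parts. For existence, observe that the primal problem \eqref{eq:orig-eb} is feasible if and only if $\bar{c}(1)$ lies in the relative interior of the convex hull of the control features $\{c(X_i):T_i=0\}$; once feasible, strict concavity of Shannon entropy on the simplex guarantees a unique maximizer. So the task reduces to showing this geometric condition holds with probability tending to one. For the $L^2$-type conclusion, I would substitute the closed-form dual expression \eqref{eq:orig-eb-link} for $w_i^{\mathrm{EB}}$ and reduce the sum of squares to a ratio of empirical averages to which a law of large numbers applies.

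For the geometric condition, I would argue in three steps. First, the SLLN together with $\mathrm{E}[c(X)]<\infty$ gives $\bar{c}(1)\to\mu_1:=\mathrm{E}[c(X)\mid T=1]$ almost surely. Second, Assumption \ref{assump:overlap} makes the conditional laws of $X\mid T=1$ and $X\mid T=0$ mutually absolutely continuous, with density $e(X)(1-e(X))^{-1}\mathrm{P}(T=0)/\mathrm{P}(T=1)$ finite almost surely; in particular they share a common essential support $S$ for $c(X)$. I then claim $\mu_1$ lies in the relative interior of $\mathrm{conv}(S)$: if instead it lay on a supporting hyperplane $H$ of $\mathrm{conv}(S)$ inside the affine hull of $S$, the law of $c(X)$ given $T=1$ would be forced to concentrate on $H$, and by equivalence so would the law given $T=0$, contradicting the definition of that affine hull. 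Third, standard convergence of empirical convex hulls in finite dimensions ensures that $\mathrm{conv}\{c(X_i):T_i=0\}$ exhausts $\mathrm{relint}(\mathrm{conv}(S))$ as $n_0\to\infty$, so $\bar{c}(1)$ eventually falls inside it with probability tending to one.

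For the second part, write $Z_n(\theta)=\sum_{T_i=0}\exp(\theta^T c(X_i))$. A direct calculation from \eqref{eq:orig-eb-link} gives
\[
\sum_{T_i=0}(w_i^{\mathrm{EB}})^2 = \frac{Z_n(2\hat{\theta}^{\mathrm{EB}})}{Z_n(\hat{\theta}^{\mathrm{EB}})^2} = \frac{1}{n_0}\cdot\frac{n_0^{-1}Z_n(2\hat{\theta}^{\mathrm{EB}})}{\bigl(n_0^{-1}Z_n(\hat{\theta}^{\mathrm{EB}})\bigr)^2}.
\]
Consistency of $\hat{\theta}^{\mathrm{EB}}$ --- an M-estimator whose dual objective \eqref{eq:orig-eb-dual} converges to a strictly convex population limit --- combined with a uniform law of large numbers would then show that both $n_0^{-1}Z_n(\hat{\theta}^{\mathrm{EB}})$ and $n_0^{-1}Z_n(2\hat{\theta}^{\mathrm{EB}})$ converge in probability to finite constants, so the displayed ratio is $O_p(1/n_0)$ and the conclusion follows.

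The hardest step, I expect, is step three of the existence argument: quantitatively converting the population statement ``$\mu_1\in\mathrm{relint}(\mathrm{conv}(S))$'' into a high-probability statement about the empirical convex hull, particularly when $c(X)$ is unbounded and the empirical hull only fills out $\mathrm{conv}(S)$ in a delicate asymptotic sense. A related technical caveat is that the stated hypothesis ``$\mathrm{E}[c(X)]$ exists'' is likely too weak for the $L^2$ part: the uniform LLN applied to $\exp(\theta^T c(X))$ really needs the moment generating function of $c(X)\mid T=0$ to be finite in a neighborhood of $2\theta_0$.
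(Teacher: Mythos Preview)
Your existence argument parallels the paper's but at a higher level of abstraction. The paper makes your ``hardest step'' concrete and thereby sidesteps the difficulty you flag: after $\bar c(1)\to\bar c^*(1):=\mathrm{E}[c(X)\mid T=1]$ by the WLLN, it tiles a neighbourhood of $\bar c^*(1)$ by $3^p$ small disjoint boxes, observes that each box carries positive mass under the law of $c(X)\mid T=0$ (this is exactly where overlap enters), and deduces that with probability at least $1-3^p(1-\rho)^n$ every box receives at least one control sample, so the empirical control hull contains a fixed $\epsilon$-cube about $\bar c^*(1)$ and hence contains $\bar c(1)$. This finite-configuration device avoids any appeal to global convergence of empirical hulls toward $\mathrm{relint}(\mathrm{conv}(S))$ and works without extra care when $c(X)$ is unbounded.

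For the second claim the paper takes a completely different route that does not pass through $\hat\theta^{\mathrm{EB}}$ at all: it recycles the box construction in a blocking scheme over successive stretches of the sample to exhibit, for each large $n$, a \emph{feasible} weight vector $\tilde w$ with $\max_i\tilde w_i\le 1/k$ for $k\to\infty$, and then argues that the existence of such a feasible point forces $\sum_i(w_i^{\mathrm{EB}})^2\to 0$ for the entropy maximizer. This uses nothing beyond overlap and a first moment for $c(X)$. Your dual-ratio argument is tidier when it applies, but it needs hypotheses the proposition does not supply. You already flag the MGF requirement near $2\theta_0$; the more basic gap is that you are presupposing $\hat\theta^{\mathrm{EB}}\to\theta_0$ for some finite $\theta_0$ in the first place. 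In the paper that consistency is established only later (Lemma~\ref{lem:robust-t}) and only under the additional assumption that $\mathrm{logit}(e(X))$ is linear in $c(X)$. Absent any model assumption, the population dual $\theta\mapsto\log\mathrm{E}\bigl[\exp(\theta^Tc(X))\mid T=0\bigr]-\theta^T\mu_1$ may fail to be finite away from the origin, or may have its infimum only as $\|\theta\|\to\infty$, so neither the convergence $\hat\theta^{\mathrm{EB}}\to\theta_0$ nor the LLN on $n_0^{-1}Z_n(2\hat\theta^{\mathrm{EB}})$ can be carried out under the proposition's hypotheses alone. The paper's feasibility-based argument is designed precisely to avoid this dependence on a correctly specified (or even well-posed) dual limit.
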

\begin{proof}
Since the expectation of $c(X)$ exist, the weak law of large number
says
$\bar{c}(1) \overset{\mathrm{p}}{\to} \bar{c}^{*}(1) = \mathrm{E}[c(X)|T=1]$. Therefore
\begin{lemma}
\label{lem:c-converge}
For any $\epsilon > 0$, $\mathrm{P}(\|\bar{c}(1) - \bar{c}^{*}(1)\|_{\infty} \ge \epsilon)
\to 0$ as $n \to \infty$.
\end{lemma}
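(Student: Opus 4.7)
The plan is to show that Lemma~\ref{lem:c-converge} reduces to the weak law of large numbers (WLLN) applied coordinate by coordinate, with two routine wrinkles: the random sample size $n_1$ and the passage from coordinatewise convergence to convergence in $\ell_\infty$.

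First, I would fix notation. The treated sample size is $n_1 = \sum_{i=1}^n T_i$, so by the ordinary WLLN applied to the Bernoulli sequence $\{T_i\}$ we have $n_1/n \to \pi := \mathrm{P}(T=1)$ in probability, and by Assumption~\ref{assump:overlap} we have $\pi > 0$, so $n_1 \to \infty$ in probability. Conditional on the event $\{T_i = 1\}$, the $X_i$'s are i.i.d.\ draws from the conditional distribution of $X$ given $T=1$; this is the distribution for which $\bar{c}^{*}(1) = \mathrm{E}[c(X) \mid T=1]$ is the mean. Since the expectation of $c(X)$ is assumed to exist, this conditional expectation exists as well.

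Second, I would apply the WLLN coordinatewise. Fix a coordinate $j \in \{1,\ldots,p\}$. I would use the standard ``random index'' argument: condition on $n_1 = k$, so that $\bar{c}_j(1) = k^{-1}\sum_{\ell=1}^k c_j(X_{i_\ell})$ is an average of $k$ i.i.d.\ copies of $c_j(X)$ drawn from the conditional distribution given $T=1$; the ordinary WLLN yields, for every $\epsilon > 0$,
\begin{equation*}
\mathrm{P}\bigl(|\bar{c}_j(1) - \bar{c}^{*}_j(1)| \ge \epsilon \,\big|\, n_1 = k\bigr) \to 0 \quad \text{as } k \to \infty.
\end{equation*}
Combining with $n_1 \to \infty$ in probability (via a truncation: restrict to the event $n_1 \ge N$ for large $N$, which has probability approaching $1$), one concludes $\bar{c}_j(1) \overset{p}{\to} \bar{c}^{*}_j(1)$.

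Third, I would promote this to $\ell_\infty$ convergence via a union bound over the finitely many coordinates:
\begin{equation*}
\mathrm{P}\bigl(\|\bar{c}(1) - \bar{c}^{*}(1)\|_\infty \ge \epsilon\bigr) \le \sum_{j=1}^p \mathrm{P}\bigl(|\bar{c}_j(1) - \bar{c}^{*}_j(1)| \ge \epsilon\bigr),
\end{equation*}
and each summand vanishes as $n \to \infty$ by the previous step. The only step requiring any care is the handling of the random index $n_1$; everything else is a direct invocation of the WLLN and a finite union bound, so I do not anticipate a genuine obstacle.
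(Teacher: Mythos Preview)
Your proposal is correct and takes essentially the same approach as the paper: the paper's entire argument is the single sentence ``Since the expectation of $c(X)$ exist, the weak law of large numbers says $\bar{c}(1) \overset{\mathrm{p}}{\to} \bar{c}^{*}(1) = \mathrm{E}[c(X)|T=1]$,'' with the lemma stated as an immediate consequence. You have simply filled in the details the paper omits---the random-index handling of $n_1$ and the finite union bound---so there is no substantive difference.
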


Now condition on $\|\bar{c}(1) - \bar{c}^{*}(1)\|_{\infty} \ge
\epsilon$, i.e.\ $\bar{c}(1)$ is in the box of side length $2\epsilon$
centered at $\bar{c}^{*}(1)$, we want to prove that with probability going
to $1$ there exists $w$ such that $w_i > 0$, $\sum_{T_i = 0} w_i = 1$
and $\sum_{T_i = 0} w_i c(X_i) = \bar{c}(1)$. Equivalently, this is
saying the convex hull generated by $\{c(X_i)\}_{T_i = 0}$ contains
$\bar{c}(1)$. We indeed prove a stronger result:
\begin{lemma}
\label{lem:hull-box}
With probability going to $1$ the convex hull generated by
$\{c(X_i)\}_{T_i = 0}$ contains the box $B_{\epsilon}(\bar{c}^{*}(1)) = \{
c(x): \|c(x) - \bar{c}^{*}(1) \|_{\infty} \le \epsilon \}$ for some
$\epsilon > 0$.
\end{lemma}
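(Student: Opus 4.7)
The plan is to establish the lemma in two stages. First, I would show that $\bar{c}^{*}(1)$ lies in the interior of the convex hull of the support of the conditional distribution of $c(X)$ given $T = 0$. Second, I would show that the random convex hull of the control points eventually covers any such interior point, and hence covers a box around $\bar{c}^{*}(1)$.

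For the first stage, Assumption \ref{assump:overlap} forces the two conditional laws of $X$ given $T=0$ and given $T=1$ to be mutually absolutely continuous with respect to the marginal law of $X$. Pushed forward through $x \mapsto c(x)$, the induced laws of $c(X)$ under the two regimes share a common support $S \subseteq \mathbb{R}^p$. Since $\bar{c}^{*}(1) = \mathrm{E}[c(X) \mid T = 1]$ is the mean of a distribution supported on $S$, it lies in the relative interior of $\mathrm{conv}(S)$. Under the standing (implicit) non-degeneracy of $c(X)$ given $T=1$—which must be assumed for the moment-balancing system to be well-posed and for $\hat{\theta}^{\mathrm{EB}}$ to be unique—$\mathrm{conv}(S)$ has nonempty topological interior in $\mathbb{R}^p$, and $\bar{c}^{*}(1)$ lies in it.

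For the second stage, by Carath\'eodory's theorem one can pick $v_1, \ldots, v_{p+1} \in S$ whose convex hull is a non-degenerate simplex containing a box $B_{2\epsilon}(\bar{c}^{*}(1))$ for some $\epsilon > 0$. A continuity/compactness argument then produces $\delta = \delta(\epsilon) > 0$ such that perturbing each $v_k$ by at most $\delta$ in $\|\cdot\|_{\infty}$ still leaves the resulting simplex containing $B_{\epsilon}(\bar{c}^{*}(1))$. Because each $v_k$ lies in $S$, $\mathrm{P}(\|c(X) - v_k\|_{\infty} < \delta \mid T = 0) > 0$, and the number of control units $n_0$ diverges to infinity in probability by the weak law of large numbers applied to $\{1 - T_i\}$ (using $\mathrm{P}(T=0) > 0$ from the overlap assumption). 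A union bound over $k = 1, \ldots, p+1$ then yields that, with probability tending to $1$, there exist control indices $i_1, \ldots, i_{p+1}$ with $\|c(X_{i_k}) - v_k\|_{\infty} < \delta$ simultaneously. The convex hull of these $p+1$ samples contains $B_{\epsilon}(\bar{c}^{*}(1))$, and a fortiori so does the convex hull of all control points.

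The main obstacle is the first stage: rigorously placing $\bar{c}^{*}(1)$ in the topological (not merely relative) interior of $\mathrm{conv}(S)$. If the joint distribution of $c(X)$ were concentrated on a proper affine subspace, no $p$-dimensional box would fit, and the lemma as stated would fail. This degenerate case, however, would already make the moment-balancing system redundant and the dual problem \eqref{eq:orig-eb-dual} non-strictly-convex; so it is natural to treat non-degeneracy of $c(X)$ as a standing hypothesis throughout the paper. Once this is in hand, the remaining steps are combinatorial-geometric and a standard sampling argument, both routine.
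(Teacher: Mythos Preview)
Your overall strategy matches the paper's: locate a fixed finite collection of regions, each with positive conditional probability under $T=0$, such that getting one control sample in every region forces the convex hull to cover $B_\epsilon(\bar{c}^*(1))$; then apply a union bound. The paper's construction differs in detail---it uses a grid of $3^p$ small boxes centered at $\bar{c}^*(1)+\tfrac{3}{2}\epsilon b$ for $b\in\{-1,0,1\}^p$ rather than the vertices of a simplex---but the skeleton is identical.

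There is, however, a genuine gap in your second stage. Carath\'eodory's theorem does not deliver what you claim: it says every point of $\mathrm{conv}(S)$ is a convex combination of at most $p+1$ points of $S$, not that those $p+1$ points form a simplex with the given point in its \emph{interior}. For a counterexample take $p=2$, $S=\{(\pm 1,0),(0,\pm 1)\}$ with the uniform law; the mean is the origin, and it lies on the boundary of every triangle with vertices in $S$. The correct tool here is Steinitz's theorem: if $x\in\mathrm{int}(\mathrm{conv}(S))$ then some subset of $S$ of size at most $2p$ has $x$ in the interior of its convex hull. With that substitution (and $2p$ replacing $p+1$) your perturbation-and-union-bound argument goes through unchanged, and in one respect is cleaner than the paper's: your neighborhoods of the $v_k$ carry positive probability simply because $v_k\in S$, whereas the paper's assertion that each grid box $R_i$ has positive mass does not actually follow from the overlap assumption alone and relies on an implicit regularity assumption on the law of $c(X)$.
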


Proposition \ref{prop:exist-whp} follow immediately from Lemma
\ref{lem:c-converge} and Lemma \ref{lem:hull-box}. Now we prove Lemma
\ref{lem:hull-box}. Denote the sample space of $X$ by
$\Omega(X)$. Assumption \ref{assump:overlap} implies $\bar{c}^{*}(1)$
hence $B_{\epsilon}(\bar{c}^{*}(1))$ is in the interior of the convex
hull of $\Omega(X)$ for sufficiently small $\epsilon$. Let
$R_i,~i=1,\ldots,3^p$, be the $3^p$ boxes centered at $\bar{c}^{*}(1)
+\frac{3}{2} \epsilon b$, where $b \in \mathbb{R}^p$ is a vector that
each entry can be $-1$, $0$, or $1$. It is easy to check that the sets
$R_i$ are disjoint and the convex hull of $\{x_i\}_{i=1}^{3^p}$ contains
$B_{\epsilon}(\bar{c}^{*}(1))$ if $x_i \in R_i,~i=1,\ldots,3^p$. Since
$0 < P(T=0|X) < 1$, $\rho = \min_i \mathrm{P}(X \in R_i | T = 0) >
0$. This implies
\begin{equation}
\label{eq:whp}
\begin{aligned}
\mathrm{P}(\exists X_i\in R_i~\mathrm{and}~T_i=0,~\forall i = 1, \ldots, 3^p) &\ge 1 -
\sum_{i=1}^{3^p} \mathrm{P}(X \not\in R_i | T=0)^n \\
  &\ge 1 - 3^p (1 - \rho)^n \\
  &\to 1
\end{aligned}
\end{equation}
as $n \to \infty$. This proves the lemma because the event in the left
hand side implies the convex hull generated by $\{c(X_i)\}_{T_i=0}$
contains the desired box. Note that \eqref{eq:whp} also tells us how
many samples we actually need to ensure the existence of
$w^{\textrm{EB}}$. Indeed if $n \ge \rho^{-1}(p
\log 2 + \log \delta^{-1}) \ge \log_{(1 - \rho)} (\delta 2^{-p})$,
then the probability in \eqref{eq:whp} is greater than $1 -
\delta$. Usually we expect $\delta = O(3^{-p})$. If this is the case,
the number of samples needed is $n = O(p \cdot 3^p)$\footnote{Note that this
naive rate can actually be greatly improved by Wendel's theorem in geometric
probability theory.}.

Now we turn to the second claim of the proposition, i.e.\ $\sum_{T_i=0}
w_i^2 \overset{p}{\to} 0$. To prove this, we only need to find a
sequence (with respect to
growing $n$) of feasible solutions to \eqref{eq:orig-eb} such that
$\max_i w_i \to 0$. This is not hard to show, because the probability
in \eqref{eq:whp} is exponentially decaying as $n$ increases. We can
pick $n_1 \ge N(\delta, p, \rho)$ such that the probability of the
convex hull of $\{x_i\}_{i=1}^{n_1}$ contains
$B_{\epsilon}(\bar{c}^{*}(1))$ is at least $1 - \delta$, then pick
$n_{i+1} \ge n_i +3^i N(\delta, p, \rho)$ so the convex hull of $\{x_i\}_{i=n_i+1}^{n_{i+1}}$ contains
$B_{\epsilon}(\bar{c}^{*}(1))$ with probability at least $1 -
3^i\delta$. This means for each
$\{x_i\}_{i=n_i+1}^{n_{i+1}},~i=0,1,\ldots$, we have a set of weights
$\{\tilde{w}_i\}_{i=n_i+1}^{n_{i+1}}$ such that
$\sum_{i=n_i+1}^{n_{i+1}} \tilde{w}_i x_i = \bar{c}(1)$. Now suppose $
n_k \le n < n_{k+1}$, the choice $w_i = \tilde{w}_i / k$ if $i \le
n_k$ and $w_i = 0$ if $i > n_k$ satisfies the constraints and $\max_i
w_i \le k$. As $n \to \infty$, this implies $\max_i w_i \to 0$ and
hence $\sum_i w_i^2 \to 0$ with probability tending to $1$.
\end{proof}


Now we turn to the main theorem of the paper (Theorem \ref{thm:main}).
The first claim in Theorem \ref{thm:main} follows immediately from the following lemma:
\begin{lemma}
\label{lem:robust-t}
  Under the assumptions in Theorem \ref{thm:main} and suppose $\mathrm{logit}(P[T=1|X]) = \sum_{j=1}^p \theta_j^{*} c_j(X)$,
  then as $n \to \infty$, $\hat{\theta}^{\mathrm{EB}} \overset{p}{\to}
  \theta^{*}$. As a consequence,
\[
  \mathrm{E}\left[\sum_{T_i=0} w_i^{\mathrm{EB}} Y_i\right] \overset{p}{\to} \mathrm{E}[Y(0)|T=1].
\]
\end{lemma}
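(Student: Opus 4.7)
The plan is to recognize $\hat\theta^{\mathrm{EB}}$ as the M-estimator defined by the convex dual program \eqref{eq:orig-eb-dual} and invoke the standard convex-M-estimation consistency argument, then chase the consequence through a law-of-large-numbers computation using the logit-linear assumption.

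First I would rewrite the dual objective in a form with a well-defined pointwise limit. Since adding the constant $\log n_0$ does not change the minimizer, it suffices to minimize
\[
\tilde Q_n(\theta) = \log\!\left(\frac{1}{n_0}\sum_{T_i=0}\exp(\theta^T c(X_i))\right) - \theta^T \bar c(1).
\]
Under the assumed moment condition on $c(X)$, the law of large numbers gives, for each fixed $\theta$,
\[
\tilde Q_n(\theta) \xrightarrow{p} Q(\theta) = \log E\!\left[e^{\theta^T c(X)}\,\middle|\,T=0\right] - \theta^T E[c(X)\mid T=1].
\]
Both $\tilde Q_n$ and $Q$ are convex (log-sum-exp plus linear), and $Q$ is strictly convex under the mild identification condition that $c(X)\mid T=0$ is not supported on a lower-dimensional affine subspace, so $Q$ has a unique minimizer.

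Next I would verify that this unique minimizer is $\theta^*$ by checking that $\nabla Q(\theta^*)=0$. The logit-linear assumption gives the crucial identity $\exp(\theta^{*T}c(X))=e(X)/(1-e(X))$, so a short calculation yields
\[
E\!\left[c(X)e^{\theta^{*T}c(X)}\,\middle|\,T=0\right] = \tfrac{1}{1-\pi}E[e(X)c(X)] = \tfrac{\pi}{1-\pi}E[c(X)\mid T=1],
\]
and the analogous identity for the normalizing denominator. Hence $\nabla Q(\theta^*)=0$, and by strict convexity $\theta^*=\arg\min Q$. Consistency $\hat\theta^{\mathrm{EB}}\xrightarrow{p}\theta^*$ then follows from the Hjort--Pollard/Andersen--Gill convexity lemma: pointwise (in probability) convergence of convex functions to a limit with a unique minimizer implies convergence of the minimizers. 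Equivalently, one can cast this inside the stacked M-estimation system $\zeta$ from Section \ref{sec:efficiency}, noting that $E[\psi_j(X,T;\theta^*,m^*)]=0$ at the population parameters.

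For the consequence, write
\[
\sum_{T_i=0} w_i^{\mathrm{EB}} Y_i = \frac{n^{-1}\sum_{T_i=0} e^{\hat\theta^T c(X_i)}Y_i}{n^{-1}\sum_{T_i=0} e^{\hat\theta^T c(X_i)}}.
\]
Continuity of the integrands in $\theta$ combined with a uniform law of large numbers on a compact neighborhood of $\theta^*$ (justified by the moment assumptions and $\mathrm{Var}(Y(0))<\infty$) lets us plug in $\hat\theta\xrightarrow{p}\theta^*$; the denominator tends to $E[(1-T)e^{\theta^{*T}c(X)}]=\pi$, while the numerator, using strong ignorability and the identity $(1-e(X))e^{\theta^{*T}c(X)}=e(X)$, simplifies to $E[e(X)E[Y(0)\mid X]]=E[T Y(0)]=\pi E[Y(0)\mid T=1]$. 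The ratio therefore converges to $E[Y(0)\mid T=1]$, as required.

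The main obstacle is the argmin-convergence step: one must carefully separate the almost-sure (or in-probability) pointwise convergence $\tilde Q_n\to Q$ from the argmin convergence, and secure strict convexity of $Q$ so that the limiting minimizer is identified. Once convexity is exploited—turning pointwise into uniform convergence on compacta for free—the remaining pieces are routine LLN computations that exploit the logit-linear representation to collapse $(1-e(X))e^{\theta^{*T}c(X)}$ to $e(X)$.
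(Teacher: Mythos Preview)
Your proposal is correct and, at its core, matches the paper's argument: both recognize $\hat\theta^{\mathrm{EB}}$ as an M-estimator and verify that $\theta^*$ is the unique population solution by exploiting the identity $e^{\theta^{*T}c(X)}=e(X)/(1-e(X))$ under the logit-linear assumption. The paper works through the Z-estimation (estimating-equations) formulation---writing out the first-order conditions \eqref{eq:orig-eb-dual-derivative} as $\sum\psi_j=0$ and checking that $E[\psi_j(X,T;\theta^*,m^*)]=0$ with $\theta^*$ the only root---whereas you work one level up with the convex objective $\tilde Q_n$ and invoke the Hjort--Pollard convexity lemma to pass from pointwise convergence of $\tilde Q_n$ to argmin convergence. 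Your route has the advantage of making the identification step (uniqueness of the limiting minimizer) transparent via strict convexity, and of sidestepping the regularity conditions one would otherwise need to cite for Z-estimation consistency; the paper's route has the advantage of feeding directly into the sandwich-variance computation of Theorem~\ref{thm:variance}. For the consequence, your explicit ratio-of-averages LLN argument is more detailed than the paper's, which simply observes that $w_i^{\mathrm{EB}}$ converges to the true inverse-probability weight and defers to the known consistency of IPW.
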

\begin{proof}
  The proof is a standard application of M-estimation (more precisely
  Z-estimation) theory. We will follow the estimating equations
  approach described in \citep{Stefanski2002} to derive consistency of
  $\hat{\theta}^{\textrm{EB}}$.
  First we note the first order optimality
  condition of \eqref{eq:orig-eb-dual} is
  \begin{equation}
    \label{eq:orig-eb-dual-derivative}
    \sum_{i=1}^n (1 - T_i) e^{\sum_{k=1}^p \theta_k c_k(X_i)} (c_j(X_i) - \bar{c}_j(1)) = 0,~j=1,\ldots,R.
  \end{equation}
  We can rewrite \eqref{eq:orig-eb-dual-derivative} as estimating
  equations. Let $\phi_j(X,T;m) = T(c_j(X) - m_j),~j=1,\ldots,R$ and
  $\psi_j(X,T;\theta,m) = (1-T) \exp\{\sum_{k=1}^p \theta_k c_k(X)\} (c_j(X) -
  m_j)$, then \eqref{eq:orig-eb-dual-derivative} is equivalent to
  \begin{equation}
    \label{eq:eq:orig-eb-dual-ee}
    \begin{aligned}
      \sum_{i=1}^n \phi_j(X_i,T_i;m) = 0,~j=1,\ldots,R, \\
      \sum_{i=1}^n \psi_j(X_i,T_i;\theta, m) = 0,~j=1,\ldots,R. \\
    \end{aligned}
  \end{equation}
  Since $\phi(\cdot)$ and $\psi(\cdot)$ are all smooth functions of
  $\theta$ and $m$, all we need to verify is that $m^{*}_j =
  \mathrm{E}[c_j(X)|T=1]$ and  $\theta^{*}$ is
  the unique solution to the population version of
  \eqref{eq:eq:orig-eb-dual-ee}. It is obvious that
  $m^*$ is the solution to $ \mathrm{E}[\phi_j(X,T;m)]
  = 0,~j=1,\ldots,R$. Now take conditional expectation of $\psi_j$ given $X$:
\[
\begin{aligned}
\mathrm{E}[\psi_j(X, T; \theta, m^{*})\,|\,X]
 &= (1 - e(X)) e^{\sum_{k=1}^p \theta_k c_k(X) } (c_j(X) -
  m_j^*)  \\
&= \left(1 - \frac{e^{\sum_{k=1}^p \theta_k^{*} c_k(X)}}{1 +
    e^{\sum_{k=1}^p \theta_k^{*} c_k(X)}}\right) e^{\sum_{k=1}^p \theta_k
  c_k(X)} (c_j(X) -
  m_j^{*}) \\
&= \frac{e^{\sum_{k=1}^p \theta_k c_k(X)}}{1 +
    e^{\sum_{k=1}^p \theta_k^{*} c_k(X)}}  (c_j(X) - \mathrm{E}[c_j(X)|T=1]).
\end{aligned}
\]
The only way to make $\mathrm{E}[\psi_j(X, T; \hat{\theta}, m^{*})] = 0$
is to have $$\frac{e^{\sum_{k=1}^p \hat{\theta}_k c_k(X)}}{1 +
    e^{\sum_{k=1}^p \theta_k^{*} c_k(X)}} = \mathrm{const} \cdot \mathrm{P}(T = 1|X),$$
i.e.\ $\hat{\theta} = \theta^{*}$. This proves the consistency of $\hat{\theta}^{\textrm{EB}}$.

The consistency of
  $\hat{\gamma}^{\textrm{EB}}$ is proved by noticing
\[
w_i^{\textrm{EB}} = \frac{\exp(\sum_{j=1}^p \hat{\theta}_j^{\textrm{EB}} c_j(X_i))}{\sum_{T_i=0}\exp(\sum_{j=1}^p
    \hat{\theta}_j^{\textrm{EB}} c_j(X_i))} \overset{p}{\to}
  \frac{P(T_i=1|X_i)}{1 - P(T_i=1|X_i)},
\]
which is the IPW-NR weight defined in \eqref{eq:gamma-ipw}.
\end{proof}

The second claim is a corollary of Theorem \ref{thm:variance}, which
is proved below. For simplicity
we denote $\xi = (m^T, \theta^T, \mu(1|1), \gamma)^T$ and the true parameter
as $\xi^{*}$. Throughout this section we assume
$\mathrm{logit}(e(X)) = \sum_{j=1}^p \theta_j^{*}
c_j(X)$. Denote $\tilde{c}(X) = c(X) - \bar{c}^*(1)$, $e^{*}(X) = e(X;
\theta^{*})$, $l^{*}(X) =
\exp\{\sum_{j=1}^p \theta^{*}_j c_j(X)\} = e^{*}(X) / (1 -
e^{*}(X))$. Let
\[
  \begin{split}
\phi_j(X,T;m) &= T(c_j(X) - m_j),~j=1,\ldots,p,\\
\psi_j(X,T;\theta,m) &= (1-T) e^{\sum_{k=1}^p \theta_k c_k(X)} (c_j(X) -
  m_j), j = 1,\dotsc,p,\\
\varphi_{1|1}(X,T,Y;\mu(1|1)) &= T (Y - \mu(1|1)),\\
\varphi(X,T,Y;\theta, \mu(1|1), \gamma) &= (1-T)e^{\sum_{j=1}^p
    \theta_j c_j(X)} (Y + \gamma - \mu(1|1)), \\
  \end{split}
\] and $\zeta(X,T,Y;m,\theta,\mu(1|1),\gamma) = (\phi^T,
\psi^T, \varphi_{1|1}, \varphi)^T$ be all the estimating
equations. The Entropy Balancing estimator $\hat{\gamma}^{\textrm{EB}}$ is the solution to
\begin{equation}
  \label{eq:eb-entire-ee}
  \frac{1}{n} \sum_{i=1}^{n} \zeta(X_i,T_i,Y_i;m,\theta,\mu(1|1),\gamma) = 0.
\end{equation}

There are two forms of ``information" matrix that need to be
computed. The first is
\[
\footnotesize
\arraycolsep=3pt
\medmuskip = 1mu
\begin{aligned}
&A^{\textrm{EB}}(\xi^{*}) = \mathrm{E}\left[ - \frac{\partial}{\partial \xi^T}
  \zeta(X,T,Y;\xi^{*}) \right] \\
&=
\left(\mathrm{E}\left[ - \frac{\partial}{\partial m^T}
  \zeta(\xi^{*}) \right] ~
\mathrm{E}\left[ - \frac{\partial}{\partial \theta^T}
  \zeta(\xi^{*}) \right] ~
\mathrm{E}\left[ - \frac{\partial}{\partial \mu(1|1)}
  \zeta(\xi^{*}) \right] ~
\mathrm{E}\left[ - \frac{\partial}{\partial \gamma}
  \zeta(\xi^{*}) \right] \right) \\
&=
\mathrm{E} \left[
\begin{pmatrix}
  T \cdot I_R & 0 & 0 & 0 \\
  (1-T) l^{*}(X) \cdot I_R & -
  (1-T) l^{*}(X) (c(X) - \bar{c}^*(1)) c(X)^{T} & 0 &
  0 \\
  0^T & 0^T & T & 0 \\
  0 & - (1-T) l^{*}(X) (Y(0) - \mu^{*}(0|1))
  c(X)^T & (1-T) l^{*}(X) & - (1-T) l^{*}(X) \\
\end{pmatrix} \right] \\
&=
\pi \cdot
\begin{pmatrix}
  I_R & 0 & 0 & 0 \\
  I_R & - \mathrm{Cov}[c(X)|T=1] & 0 & 0 \\
  0^T & 0^T & 1 & 0 \\
  0 &  - \mathrm{Cov}(Y(0),c(X)|T=1) & 1 & -1 \\
\end{pmatrix}.
\end{aligned}
\]

A very useful identity in the computation of the expectation is
\[
\begin{aligned}
\mathrm{E}[f(X,Y)|T=1]
&= \pi^{-1} \mathrm{E}[e(X) f(X,Y)] \\
&= \frac{\mathrm{P}(T=0)}{\pi} \cdot \mathrm{E}\left[
  \frac{e(x)}{1-e(X)} f(X,Y) \middle| T=0 \right].
\end{aligned}
\]

The second information matrix is the covariance of
$\zeta(X,T,Y;\xi^{*})$. Denote $\tilde{Y}(t) = Y(t) - \mu^{*}(t|1),~t=0,1$
\[
\footnotesize
\arraycolsep=3pt
\medmuskip = 1mu
\begin{aligned}
  &B^{\textrm{EB}}(\xi^{*}) = \mathrm{E}[\zeta(X,Y,T;\xi^{*})
  \zeta(X,Y,T;\xi^{*})^T] \\
  &= \mathrm{E} \left[
    \begin{pmatrix}
      T \tilde{c}(X) \tilde{c}(X)^T & 0  & T \tilde{Y}(1) \tilde{c}(X)  & 0 \\
      0 & (1-T) l^{*}(X)^2 \tilde{c}(X)
      \tilde{c}(X)^{T} & 0 & (1-T) l^{*}(X)^2
      \tilde{Y}(0) \tilde{c}(X)\\
      T \tilde{Y}(1) \tilde{c}(X)^T & 0^T & T \tilde{Y}^2(1) & 0
\\
      0 & (1-T) l^{*}(X)^2
      \tilde{Y}(0) \tilde{c}(X)^T & 0 & (1-T) l^{*}(X)^2 \tilde{Y}^2(0) \\
    \end{pmatrix} \right]. \\
\end{aligned}
\]

The asymptotic distribution of $\hat{\gamma}^{\textrm{EB}}$ is
$\mathrm{N}(\gamma, V^{\textrm{EB}}(\xi^{*})/n)$ where $V^{\textrm{EB}}(\xi^{*})$ is the
bottom right entry of $A^{\textrm{EB}}(\xi^{*})^{-1} B^{\textrm{EB}}(\xi^{*})
A^{\textrm{EB}}(\xi^{*})^{-T}$. Let's denote
\[H_{a_1,a_2} = \mathrm{Cov}(a_1,a_2
| T=1),\]
\[G_{a_1,a_2} = \mathrm{E}\left[l^{*}(X)(a_1 - \mathrm{E}[a_1|T=1])
  (a_2 - \mathrm{E}[a_2|T=1])^T|T=1\right]^T,\]
and $H_a = H_{a,a}$, $G_a = G_{a,a}$. So
\[
A^{\textrm{EB}}(\xi^{*}) = \pi \cdot
\begin{pmatrix}
  I_R & 0 & 0 & 0 \\
  I_R & - H_{c(X)} & 0 & 0 \\
  0^T & 0^T & 1 & 0 \\
  0 &  - H_{Y(0),c(X)} & 1 & -1 \\
\end{pmatrix},
\]
\[
A^{\textrm{EB}}(\xi^{*})^{-1} = \pi^{-1} \cdot
\begin{pmatrix}
  I_R & 0 & 0 & 0 \\
  H_{c(X)}^{-1} & - H_{c(X)}^{-1} & 0 & 0 \\
  0^T & 0^T & 1 & 0 \\
  - H_{c(X),Y(0)}^T H_{c(X)}^{-1} &  H_{c(X), Y(0)}^T H_{c(X)}^{-1} & 1 &
  -1 \\
\end{pmatrix},
\]
and
\[
B^{\textrm{EB}}(\xi^{*}) =
\pi \cdot
  \begin{pmatrix}
    H_{c(X)} & 0 &H_{c(X), Y(1)} & 0 \\
    0 & G_{c(X)} & 0 & G_{ c(X), Y(0)} \\
    H_{Y(1), c(X)} & 0^T & H_{Y(1)} & 0 \\
    0 & G_{Y(0), c(X)}^T & 0 & G_{Y(0)} \\
  \end{pmatrix}.
\]

Thus
\[
\begin{aligned}
V^{\textrm{EB}}
= \pi^{-1} 
\cdot \left\{H_{c,0}^T H_{c}^{-1} \left(
H_{c,0} + G_{c}
  H_{c}^{-1} H_{c,0} - 2 G_{c,0} - 2 H_{c,1} \right)  + H_{1} +  G_{0}\right\}
\end{aligned}.
\]

It would be interesting to compare $V^{\textrm{EB}}(\xi^{*})$ with
$V^{\textrm{IPW}}(\xi^{*})$, the asymptotic variance of
$\hat{\gamma}^{\textrm{IPW}}$. The IPW PATT estimator
\eqref{eq:gamma-ipw} is equivalent to solving the following estimating
equations
\[
\sum_{i=1}^n \left(T_i - \frac{1}{1+e^{-\sum_{k=1}^p \theta_k c_k(X_i)}}\right) c_j(X_i) = 0,~r = 1,\ldots,R,
\]
\[
  \frac{1}{n} \sum_{i=1}^{n} \varphi_{1|1}(X_i,T_i,Y_i;\theta,\mu(1|1),\gamma) = 0,
\]
\[
  \frac{1}{n} \sum_{i=1}^{n} \varphi(X_i,T_i,Y_i;\theta,\mu(1|1),\gamma) = 0.
\]

If we call $K_{a_1, a_2} = \mathrm{E}[(1-e(X)) a_1 a_2^T | T = 1]$, we have
\[
\footnotesize
\arraycolsep=3pt
\medmuskip = 1mu
\begin{aligned}
A^{\textrm{IPW}}(\xi^{*}) &=
\mathrm{E}\left[
  \begin{pmatrix}
    e^{*}(X) (1-e^{*}(X)) c(X) c(X)^T & 0 & 0 \\
    0^T & T & 0 \\
    - (1-T) l^{*}(X) \tilde{Y}(0)
  c(X)^T & (1-T) l^{*}(X) & - (1-T) l^{*}(X) \\
  \end{pmatrix}
\right] \\
&= \pi \cdot
  \begin{pmatrix}
    K_{c(X)} & 0 & 0 \\
    0^T & 1 & 0 \\
    - H_{Y(0),c(X)} & 1& - 1 \\
  \end{pmatrix}. \\
A^{\textrm{IPW}}(\xi^{*})^{-1} &= \pi^{-1} \cdot
  \begin{pmatrix}
    K_{c(X)}^{-1} & 0 & 0 \\
    0^T & 1 & 0 \\
    - H_{Y(0),c(X)} K_{c(X)}^{-1} & 1& - 1 \\
  \end{pmatrix}.
\end{aligned}
\]
Let $q^{*}(X) = e^{*}(X) l^{*}(X)$,
\[
\footnotesize
\arraycolsep=2pt
\medmuskip = 1mu
\begin{aligned}
B^{\textrm{IPW}}(\xi^{*}) &=
\mathrm{E}\left[
  \begin{pmatrix}
    (T-e^{*}(X))^2 c(X) c(X)^T & T (T - e^{*}(X)) \tilde{Y}(1) c(X) &
    - (1-T) q^{*}(X) \tilde{Y}(0) c(X) \\
    T (T - e^{*}(X)) \tilde{Y}(1) c(X)^T & T \tilde{Y}^2(1) & 0 \\
    - (1-T)q^{*}(X) \tilde{Y}(0) c(X) & 0 & (1-T) l^{*}(X)^2 \tilde{Y}^2(0) \\
  \end{pmatrix}
\right] \\
&= \pi \cdot
\begin{pmatrix}
  K_{c(X)} & K_{c(X), \tilde{Y}(1)} & K_{c(X), \tilde{Y}(0)} - H_{c(X), Y(0)}
  \\
  K_{c(X), \tilde{Y}(1)}^T & H_{Y(1)} & 0 \\
  K_{c(X), \tilde{Y}(0)}^T - H_{c(X), Y(0)}^T & 0 & G_{Y(0)} \\
\end{pmatrix}.
\end{aligned}
\]

$V^{\textrm{IPW}}$ can thus be computed consequently and the details
are omitted.

\bibliographystyle{chicago}
\bibliography{../reference/ref}

\end{document}